\DeclareMathAlphabet{\mathbfsf}{\encodingdefault}{\sfdefault}{bx}{sl}
\definecolor{mylightskyblue}{rgb}{0.53,0.81,0.98}
\definecolor{mydodgerblue3}{rgb}{.09,0.45,0.80}
\definecolor{mynavy}{rgb}{0, 0, 0.5}
\DeclareAcronym{ATE}{
  short = ATE,
  long  = average treatment effect
}
\DeclareAcronym{ATT}{
  short = ATT,
  long  = average treatment effect of the treated 
}
\DeclareAcronym{CART}{
  short = CART,
  long  = Classification And Regression Tree
}
\DeclareAcronym{DAG}{
  short = DAG,
  long  = directed acyclic graph
}
\DeclareAcronym{IPTW}{
  short = IPTW,
  long  = Inverse Probability of Treatment Weighting
}
\DeclareAcronym{GBSG}{
  short = GBSG,
  long  = German Breast Cancer Study Group
}
\DeclareAcronym{MFP}{
  short = MFP,
  long  = Multivariable Fractional Polynomial
}\DeclareAcronym{MOB}{
  short = MOB,
  long  = Model-based Recursive Partitioning for Subgroup Analysis
}
\DeclareAcronym{MSE}{
  short = MSE,
  long  = Mean Squared Error
}
\DeclareAcronym{palmtree}{
  short = palmtree,
  long  = partially additive (generalized) linear model tree
}
\DeclareAcronym{RFS}{
  short = RFS,
  long  = Relapse-free survival
}\DeclareAcronym{PI}{
  short = PI,
  long  = permutation importance
}
\newcommand\BibTeX{{\rmfamily B\kern-.05em \textsc{i\kern-.025em b}\kern-.08em
T\kern-.1667em\lower.7ex\hbox{E}\kern-.125emX}}
\title{Decomposition of Explained Variation in the Linear Mixed Model}
\author{{Nicholas Schreck}\\ 
	Biostatistics Division\\
	German Cancer Research Center\\
    D-69120 Heidelberg, GERMANY \\
	\texttt{nicholas.schreck@dkfz-heidelberg.de}
	\And
	{Manuel Wiesenfarth} \\
	Biostatistics Division\\
	German Cancer Research Center\\
    D-69120 Heidelberg, GERMANY \\
}
\newtheorem{theorem}{\sc Theorem}[section]
\newtheorem{lemma}[theorem]{\sc Lemma}
\begin{document}

\maketitle

\newcommand{\RSS}{\mathop{\textsc{RSS}}}
\newcommand{\TSS}{\mathop{\textsc{TSS}}}
\newcommand{\ESS}{\mathop{\textsc{ESS}}}
\newcommand{\E}{\mathbb{E}}
\newcommand{\No}{\mathcal{N}}
\newcommand{\R}{\mathbb{R}}
\newcommand{\Var}{\mathrm{var}}
\newcommand{\Cov}{\mathrm{cov}}
\newcommand{\tr}{\mathop{\mathrm{tr}}}
\newcommand{\diag}{\mathop{\mathrm{diag}}}
\newcommand{\ones}{\mathbf{1}_n}
\newcommand{\onest}{\mathbf{1}_n^{\top}}
\newcommand{\Id}{\mathbf{I}_n}
\newcommand{\Idp}{\mathbf{I}_p}
\newcommand{\se}{\sigma_{\boldsymbol{\varepsilon}}^2}
\newcommand{\seh}{\dot{\sigma}_{\boldsymbol{\varepsilon}}^2}
\newcommand{\su}{\sigma_{\boldsymbol{u}}^2}
\newcommand{\sui}{\sigma_{\boldsymbol{u}_i}^2}
\newcommand{\suone}{\sigma_{\boldsymbol{u}_1}^2}
\newcommand{\sur}{\sigma_{\boldsymbol{u}_m}^2}
\newcommand{\suj}{\sigma_{\boldsymbol{u}_j}^2}
\newcommand{\gi}{\gamma_{i}^2}
\newcommand{\suh}{\dot{\sigma}_{\boldsymbol{u}}^2}
\newcommand{\suih}{\dot{\sigma}_{\boldsymbol{u}_i}^2}
\newcommand{\syh}{\hat{\sigma}_{\boldsymbol{y}}^2}
\newcommand{\Xf}{\mathbf{X}}
\newcommand{\tXf}{\tilde{\mathbf{X}}}
\newcommand{\tXft}{\tilde{\mathbf{X}}^{\top}}
\newcommand{\Af}{\mathbf{A}}
\newcommand{\Afd}{\dot{\mathbf{A}}}
\newcommand{\Xft}{\mathbf{X}^{\top}}
\newcommand{\Zft}{\mathbf{Z}^{\top}}
\newcommand{\Zf}{\mathbf{Z}}
\newcommand{\Gf}{\mathbf{G}}
\newcommand{\Gfd}{\dot{\mathbf{G}}}
\newcommand{\Cf}{\mathbf{C}}
\newcommand{\Df}{\mathbf{D}}
\newcommand{\Dfd}{\dot{\mathbf{D}}}
\newcommand{\Pfh}{\mathbf{P}_{H}}
\newcommand{\Pfhd}{\dot{\mathbf{P}}_{H}}
\newcommand{\Pfhc}{\mathbf{P}_{H}^{C}}
\newcommand{\Pfhcd}{\dot{\mathbf{P}}_{H}^{C}}
\newcommand{\Pfhi}{\mathbf{P}_{H}^{I}}
\newcommand{\Pfhid}{\dot{\mathbf{P}}_{H}^{I}}
\newcommand{\Bf}{\mathbf{B}}
\newcommand{\Hf}{\mathbf{H}}
\newcommand{\Hfd}{\dot{\mathbf{H}}}
\newcommand{\Hfm}{\mathbf{H}^{-1}}
\newcommand{\Hfmd}{\dot{\mathbf{H}}^{-1}}
\newcommand{\Hfc}{\mathbf{H}_{C}}
\newcommand{\Hfcd}{\dot{\mathbf{H}}_{\mathbf{C}}}
\newcommand{\Hfcm}{\mathbf{H}_{\mathbf{C}}^{-1}}
\newcommand{\Hfcmd}{\dot{\mathbf{H}}_{\mathbf{C}}^{-1}}
\newcommand{\Xhxm}{ \big(\mathbf{X}^{\top}\mathbf{H}^{-1}\mathbf{X}\big)^{-1} }
\newcommand{\Xhxmd}{ 
	\big(\mathbf{X}^{\top}{(\dot{\mathbf{H}})}^{-1}\mathbf{X}\big)^{-1}}
\newcommand{\Xchcxm}{ \big(\mathbf{X}^{\top} \mathbf{C} \mathbf{H}_{\mathbf{C}}^{-1} 
	\mathbf{C} 
	\mathbf{X} \big)^{-1}  }
\newcommand{\Xchcxmd}{ \big( \mathbf{X}^{\top} \mathbf{C}\dot{H}_{\mathbf{C}}^{-1} 
\mathbf{C} 
	\mathbf{X} \big)^{-1}  }
\newcommand{\y}{\mathbf {y}   }
\newcommand{\ainv}{a^{-1}}
\newcommand{\binv}{b^{-1}}
\newcommand{\ninv}{n^{-1}}
\newcommand{\Sxh}{\hat{\mathbf{\Sigma}}_{\mathbf{X}}}
\newcommand{\Szh}{\hat{\mathbf{\Sigma}}_{\mathbf{Z}}}
\newcommand{\Szih}{\hat{\mathbf{\Sigma}}_{\mathbf{Z}_i}}
\newcommand{\Szizjh}{\hat{\mathbf{\Sigma}}_{\mathbf{Z}_i\mathbf{Z}_j}}
\newcommand{\Sxzh}{\hat{\mathbf{\Sigma}}_{\mathbf{XZ}}}
\newcommand{\Sxzih}{\hat{\mathbf{\Sigma}}_{\mathbf{XZ}_i}}
\newcommand{\Sbh}{\mathbf{\Sigma}_{\hat{\mathbf{\beta}}}}
\newcommand{\Shbh}{\dot{\mathbf{\Sigma}}_{\hat{\mathbf{\beta}}}}
\newcommand{\Sut}{\mathbf{\Sigma}_{\tilde{\mathbf{u}}}}
\newcommand{\Shut}{\dot{\mathbf{\Sigma}}_{\tilde{\mathbf{u}}}}
\newcommand{\Shuti}{\dot{\mathbf{\Sigma}}_{\tilde{\mathbf{u}}_i}}
\newcommand{\Shuiujt}{\dot{\mathbf{\Sigma}}_{\tilde{\mathbf{u}}_j\tilde{\mathbf{u}}_i}}
\newcommand{\eps}{\boldsymbol{\varepsilon} }
\newcommand{\be}{\boldsymbol{\beta} }
\newcommand{\uf}{\mathbf{u} }
\newcommand{\bh}{ \hat{\boldsymbol{\beta} } }
\newcommand{\bht}{\hat{\boldsymbol{\beta}}^{\top} }
\newcommand{\bd}{\dot{\boldsymbol{\beta}}}
\newcommand{\btd}{\dot{\boldsymbol{\beta}}^{\top}}
\newcommand{\ut}{\tilde{\mathbf{u}}}
\newcommand{\utt}{\tilde{\mathbf{u}}^{\top}}
\newcommand{\ud}{\dot{\mathbf{u}}}
\newcommand{\utd}{\dot{\mathbf{u}}^{\top}}
\newcommand{\muh}{\hat{\mu}}
\newcommand{\mud}{\dot{\mu}}
\newcommand{\Rbd}{\dot{S}_{\mathbf{X}}^2}
\newcommand{\Rbdi}{\dot{S}_{\mathbf{X}_{i}}^2}
\newcommand{\Rbdj}{\dot{S}_{\mathbf{X}_{j}}^2}
\newcommand{\Rb}{\hat{S}_{\mathbf{X}}^2}
\newcommand{\Rlmm}{\dot{R}^2}
\newcommand{\Rud}{\dot{S}_{\mathbf{Z}}^2}
\newcommand{\Rudi}{\dot{S}_{\mathbf{Z}_{i}}^2}
\newcommand{\Rbud}{\dot{S}_{\mathbf{X} \times \mathbf{Z}}^2}
\newcommand{\Ru}{\tilde{S}_{\mathbf{Z}}^2}
\newcommand{\Rbu}{\tilde{S}_{\mathbf{X} \times \mathbf{Z}}^2}
\newcommand{\Vf}{\mathbf{V}}
\newcommand{\phan}{\overset{\phantom{(\ref{orthogonal})}}{=}}
\newcommand{\pha}{\overset{\phantom{(\ref{Hinv})}}{=}}
\newcommand{\phannn}{\overset{\phantom{(\ref{orthogonal}),(\ref{orthogonal})}}{=}}
\newcommand{\phann}{\overset{\phantom{(\ref{orthogonal}),(\ref{Hinv})}}{=}}
\newcommand{\phandef}{\overset{\phantom{(\ref{orthogonal})}}{:=}}
\newcommand{\phadef}{\overset{\phantom{(\ref{Hinv})}}{:=}}
\newcommand{\iid}{i.\,i.\,d.\,}







\begin{abstract}
In the linear mixed model (LMM), the simultaneous assessment and comparison 
of dispersion relevance of explanatory variables associated with fixed and
random effects remains an important open practical problem. 
Based on the restricted maximum likelihood equations in the variance components 
form of the LMM, we prove a proper decomposition of the sum of squares of the 
dependent variable into unbiased estimators of interpretable estimands of 
explained variation. 
This result leads to a natural extension of the well-known 
adjusted coefficient of determination to the LMM. 
Further, we allocate the novel unbiased estimators of explained variation
to specific contributions of covariates associated with fixed and random effects 
within a single model fit. 
These parameter-wise explained variations constitute easily interpretable 
quantities, assessing dispersion relevance of covariates associated with both 
fixed and random effects on a common scale, thus allowing for a covariate 
ranking.

For illustration, we contrast the variation explained
by subjects and time in the longitudinal sleep deprivation study. 
By comparing the dispersion relevance of population characteristics and spatial 
levels,  we determine literacy as a major driver of income inequality in Burkina 
Faso. 
Finally, we develop a novel relevance plot to visualize the dispersion relevance 
of high-dimensional genomic markers in Arabidopsis thaliana.
\end{abstract}



\section{Introduction}
\label{sec1}

Decomposition of observed sum of squares into contributions of explanatory factors
and unexplained variation constitutes the basis of the analysis of variance (ANOVA). 
This concept is generalized in the linear model (LM), where the total sum of squares
is decomposed into explained and residual sum of squares, leading to the well-known 
coefficient of determination, $R^2$. 
It measures the proportion of variance in the response explained by the 
covariates and is widely applied to assess absolute goodness of model fit. 
The relevance or usefulness of individual covariates in the LM could be assessed, 
for example, by the associated estimated effect sizes or $p$-values, which, however, 
depend heavily on scaling and sample size, respectively \citep{Darlington1968}. 
Consequently, measures of dispersion relevance have come to the fore, such as
the proportion of explained to observed variation. 
Partitioning the explained variance into contributions of specific 
covariables is straightforward and unambiguous in the uncorrelated case. 
In observational studies, however, covariables are typically 
correlated, leading to a general debate about how to assess the usefulness or 
relevance of individual covariables in the LM \citep{Darlington1968}. 
On the one hand, the drop in model $R^2$ when a predictor is removed 
has been utilized  \citep{Darlington1968}.  
This approach, which is equivalent to the so-called semi-partial $R^2$ 
\citep{Groemping2015},
assesses how well the remaining covariates 
compensate the omitted variable in terms of model $R^2$
but does not lead to a proper partitioning of the model $R^2$ 
(i.\,e.\,semi-partial $R^2$'s do not add up to the model $R^2$). 
Hence, they are only useful in relation to each other
but do not lead to a meaningful interpretation in terms of individual explained
variation with respect to the model. 
On the other hand, fractions of $R^2$ from a single model fit
can be properly allocated to the covariates in the model, for example 
resulting in Engelhart's measure \citep{Engelhart1936} or the measure introduced
by \citet{Hoffmann1960} and formalized by \citet{Pratt1987}. 

The variance components form of the linear mixed model (LMM) 
extends the LM to handle correlated observations \citep{Searle1992}. 
Routine applications include modeling of hierarchical or multi-level data, 
e.\,g.\,in the presence of clustered, repeated or longitudinal measurements. 
Genome-wide association studies (GWAS) in which all genomic 
markers are fitted simultaneously as random slopes for regularization 
are another important application of the LMM in the 
variance components form \citep{Yang2010}. 
In this context, the combined explained variation of the covariates, 
representing genomic markers, associated with random effects is termed 
genomic variance of a trait \citep[e.g.][]{Yang2010}. 

Although arguably of equal importance as in the LM, it has not yet 
been agreed on a unified measure of explained variation in the LMM, amongst
others, because the LMM specifies not only the mean but also the covariance
structure via random effects \citep{Edwards2008} which introduces additional
uncertainty and heterogeneity. 
A general definition of a $R^2$ for both model aspects has been called mathematically 
unresolvable, hence explained variations have often been treated distinctively for both
model parts \citep{Edwards2008} and are inherently difficult to compare. 

Whereas \citet{Xu2003} defines three measures for the overall proportion of
explained variance via residuals (coefficients of alienation), 
\citet{Edwards2008} define an $R^2$ statistic for the 
fixed effects part of the marginal LMM based on an approximate $F$ statistic
while keeping the random effect structure constant. 
Complementary to that, \citet{Demidenko2012} investigate explained 
variation of random effects by treating them, conditionally, as fixed 
effects and then applying $F$-test statistics.
\citet{Nakagawa2013} propose the calculation of a marginal 
(variance explained by mean structure) and a conditional 
(variance explained by mean and covariance structure) $R^2$ which has
been extended to random slope and generalized LMMs
by \citet{Johnson2014} and \citet{Nakagawa2017}. 
Finally, generalized variance approaches projecting the (generalized) LMM covariance 
matrix to a single number, e.\,g.\,by determinant or semi-variance, have been used 
to calculate $R^2$ for the mean structure only, e.\,g.\,by \citet{Jaeger2019} and
\citet{Piepho2019, Piepho2023}. 
For a detailed review of $R^2$'s for mixed models we refer to \citet{Cantoni2021}. 

The usual results of a LMM fit include effect size estimates and $t$-statistics
for the fixed effects and estimates of the variance components corresponding to
the random effects. 
These results do not enable a parameter-wise comparison of the relevance of
covariables associated with fixed and random effects. 
The relevance of individual covariates associated with fixed effects has been 
judged using (semi-)partial F-statistics either in the marginal LMM 
\citep{Edwards2008} or using standardized general variance adjustments 
\citep{Jaeger2019}. 
For LMM's with identical covariance structure, \citet{Stoffel2021} propose
sequential comparisons of the variance of the fixed effects part of the linear
predictor in the full model versus nested models in which covariates associated
with fixed effects are iteratively dropped. 
These approaches effectively lead to semi-partial $R^2$'s for the fixed part of
the LMM. 
They require multiple model fits, do not add up to the model $R^2$ and do not constitute 
interpretable quantities in terms of model $R^2$ but only in relation to each 
other. 
Similar semi-partial approaches for the covariates associated with random effects
are not advisable, because dropping these coefficients implies an alteration of
the covariance structure of the models to be compared.
On the other hand, the intra-class correlation coefficient (ICC) 
\citep[e.g.][]{Roberts2011} and variance partition coefficients \citep[e.g.][]{Snijders2012}
have been used to determine the level-specific variations explained in random 
intercept and multi-level models, respectively.
In this context, computation of the explained variation accounted 
for by covariates associated with fixed effects is not intended.  
Further, an addition of covariates changes the magnitude of variance 
component estimates and possibly leads to negative level-specific 
explained variances.
Moreover, whereas the relevance of clusters in a random intercept 
model without fixed effects might be approximated by the ratio of variance
components (ICC), this assessment fails when fixed effects and higher order 
random effects such as random slopes are introduced. 

Hence, to the best of our knowledge, there is no method available which
properly partitions the model explained variation for covariates associated 
with fixed and random effects at the same time.
We believe that such an approach is the only viable option for parameter-wise 
and interpretable assessments of the simultaneous relevance of covariates 
associated with fixed and random effects in the LMM.

Our contribution is two-fold. 
Firstly, we prove a novel and proper decomposition of the observed variance 
in the LMM into unbiased estimators of meaningful and interpretable
estimands of explained variation.
We utilize our decomposition to derive an extension of the adjusted 
coefficient of determination to the LMM, and to attribute the total explained 
variation to the contributions of 
covariates associated with fixed and random effects. 
We emphasize that the concept of explained variation in the LMM is a 
combination of the variance of the data-generating process of covariates 
and the variance of the associated effects (zero for fixed but non-zero 
for random effects).
Our decomposition is proper because it includes the estimates 
of realizations of the random effects (empirical best linear unbiased prediction), 
the use of which has been advocated before \citep{Morris1983, Robinson1991}
such that the analysis is not solely based on population averages \citep{Xu2003}. 
Secondly, we partition the estimators of explained variation further and 
attribute components to individual parameters. 
This enables an unbiased, model-consistent and unified parameter-wise attribution of 
explained variation to single covariates associated
with fixed and random effects on the same scale within a single model fit. 
These contributions are conditional on the other variables in the model, 
i.\,e.\,we do not aim at selecting or removing model coefficients but to 
describe their dispersion relevance within the model at hand. 

Our approach is made readily 
available in a user-friendly $R$-package 
and its usefulness is illustrated in several real data sets to 
emphasize the broad applicability. 
For instance, in the sleep deprivation study analysed by \citet{lme4},
we show that about 60\% of the variation in the reaction time of subjects is
explained by the time of sleep deprivation, with approximately
equal contributions of time associated with fixed and random slopes, while
20\% are explained by the subjects themselves. 
\citet{Graeb2011} investigate inequality in Burkina Faso by using
the ICC to decompose the variability of household expenditure in a multilevel 
spatial model.  
We illustrate that the dispersion relevance of spatial levels, such as communities,
provinces and regions can be simultaneously assessed together with the dispersion
relevance of household and community characteristics. 
Amongst others, we conclude that covariates related to literacy explain about 
$23\%$ of the variance of household expenditure (proxying income inequality), 
while the spatial levels explain about $10\%$ in total.   
Finally, in the context of GWAS data on Arabidopsis thaliana, 
we introduce a novel relevance plot for the genomic 
markers on the whole genome complementing the well-known Manhattan plot which 
neglects confounding and assesses significance but not relevance. 


\section{Preliminaries}

\subsection{Linear Mixed Model and Notation}

We consider the linear mixed model
\begin{equation}\label{LMM}
	\  \y =  \mu \ones + \Xf \be + \sum_{i=1}^{m} \Zf_i \uf_i + \eps,
	\quad \eps \sim \No(\mathbf{0}, \se\Id),
	\quad \uf_i\sim\No(\mathbf{0}, \sui \mathbf{I}_{p_i})
\end{equation}
in the traditional variance components form 
\citep[\S 6, \S 7]{Searle1992} 
where $\y$ is the vector of quantitative observations, $\mu\in\R$ is an 
intercept, $\ones$ is the 
column vector of ones of length $n$, $\Id$ denotes the $n$-dimensional 
identity matrix, and $\se >0$ and $\sui \geq 0$, $i=1,...,m$.
The design matrix for the $k$ fixed effects $\be$ is given by 
$\Xf\in \R^{n\times k}$. 
The matrices $\Zf_i \in \R^{n \times p_i}$ $(i=1,...,m)$ code covariates 
corresponding to the random effect vectors $\uf_i$.
The random effects (including the residuals) are pairwise uncorrelated. 
By specifying $\Zf_i$, model (\ref{LMM}) typically allows to model 
hierarchical data (clustered data, repeated measurements, 
longitudinal data) as well as high-dimensional GWAS data. 
Let the random vector 
$\uf=(\uf_1^{\top},\ldots, \uf_m^{\top})^{\top}\sim \No(\mathbf{0}, \Df)$
be of dimension $p$, 
$\Df= \diag(\suone \mathbf{I}_{p_1},\ldots, \sur \mathbf{I}_{p_m})$,
and $\Zf =(\Zf_1,\ldots, \Zf_m)$ denote an $n\times p$ design matrix. 
Model (\ref{LMM}) has variance-covariance structure
$\Vf = \Cov(\y)=  \Zf   \Df  \Zf^{\top} + \se\Id = \sum_{i=1}^{m}\sui \Zf_i
\Zf^{\top}_i + \se\Id$,
such that $\y\sim\No(\ones \mu +  \Xf \be,  \Vf )$. 
Similar to \citet[\S 6]{Searle1992}, we use $\Gf = \Df/ \se$. 
Let $\Hf= \Vf / \se=  \Id + \Zf \Gf \Zft$ and $\Hfc= \Id + \Cf \Zf \Gf \Zft \Cf$, 
which originates from centring the columns of $\Zf$ by the centering matrix 
$\Cf= \Id- n^{-1}\ones\onest$. 
We assume that $\tilde{\Xf}= (\ones, \Xf)$ has full column rank and that the 
inverses of $\Hf$, $\tilde{\Xf}^{\top}\Hfm\tilde{\Xf}$, $\Hfc$ and 
$\Xf^{\top}\Cf \Hfcm \Cf \Xf$ exist. \newline
The following lemma introduces alternative expressions for the well-known best linear 
unbiased estimator (BLUE) $\bh$ and predictor (BLUP) $\ut$ which illustrate their 
invariance to centring the data $\y$, $\Xf$, and $\Zf$.  
The invariance also holds for the respective covariance matrices
of $\bh$ and $\ut$. 
The invariance property is important for explained variations and in
particular striking because centring $\Zf$ automatically changes the 
covariance structure $\Vf$ 
of the outcome. 
The proof can be found in the Supplemental Material. 
\begin{lemma}\label{blueblup}
	The best linear unbiased estimator $\bh$ in model (\ref{LMM})
	can be represented by $\Xchcxm\Xft \Cf \Hfcm \Cf \y$ with covariance 
	matrix $\Sbh= \se \Xchcxm$. 
	The best linear unbiased predictor $\ut$ of the random effects can be
	expressed as $\Gf\Zft\Cf\Hfcm\Cf\big(\y - \Xf\bh\big)$ with covariance 
	matrix $\Sut=\se\Gf\Zft\Cf\Hfcm\Big\{\Hfc-
	\Cf\Xf\Xchcxm\Xft\Cf\Big\}\Hfcm\Cf\Zf\Gf.$
\end{lemma}
Typically, estimation of variance components is based on restricted maximum 
likelihood (REML). 
When we replace the variance components by their estimates we mark the 
resulting quantities by a dot. 
In case that the variable already has a superscript, we drop the first 
superscript for notational convenience.

\subsection{Explained Variation and $R^2$ in the Linear Model}
\label{sec: LM_R2}

A special case of model (\ref{LMM}) is the homoscedastic LM
\begin{equation}\label{statmod}
	\y = \mu\ones + \Xf\be + \eps,\quad \eps\sim\No(0, \se\Id).
\end{equation}
The coefficient of determination $ R^2= 1 - \RSS / \TSS$
is the most popular measure of goodness-of-fit in model (\ref{statmod}),
where $\TSS= \y^{\top}\Cf \y$ is the centred total sum of squares, 
$\RSS= \hat{\eps}^{\top}\hat{\eps}$ is the residual sum of 
squares, and $\hat{\eps}$ are the estimated 
residuals, respectively.
The centred explained sum of squares is $ \ESS = \hat{\y}^{\top}\Cf \hat{\y}$ 
where $\hat{\y}$ are the fitted values. 
The intercept in model (\ref{statmod}) allows for the proper decomposition 
$\TSS = \ESS + \RSS$, which leads to $R^2= \ESS / \TSS$, 
i.\,e.\,the coefficient of determination measures the proportion of 
total variation that can be explained by the covariates after a least 
squares fit, see e.\,g.\,\citet{Kvalseth1985}. 
The $R^2$ tends to overestimate the proportion of the explained 
variation which is why the adjusted coefficient of determination
$\bar{R}^2= 1 - (1 - R^2) (n-1)/(n-k-1)= 1 - \seh/\syh$ has been 
introduced, adjusting for the degrees of freedom \citep{Kvalseth1985}. 
The expectation of the sample variability in the observations, 
$\syh= \TSS / (n-1)$, is given by
\begin{equation} \label{ELM}
	\E\big[\syh \big] = \frac{1}{n-1}\tr\Big( \Cf \big(\Cov(\y) - 
	\E[\y]\E[\y]^{\top} \big) \Big)= 
	\se + \be^{\top} \Sxh \be, \quad \Sxh = \frac{1}{n-1}\Xft \Cf \Xf.  
\end{equation}
It comprises residual variation and variation due to the covariates
associated with the fixed effects, $\be^{\top}\Sxh\be$, which is induced by the 
empirical variation of the explanatory variables $\Xf$. 
An unbiased estimator for the estimand, the sample variance $\be^{\top}\Sxh\be$ of 
$\Xf\be$, is given by
$\Rbd= \bht\Sxh\bh - \tr(\Sxh\Shbh)$.
The estimator $\Rbd$ measures the adjusted empirical variation explained 
by the covariates and their empirical pairwise covariances.
In combination with $\seh= \RSS / (n-k-1)$ and $\TSS = \ESS + \RSS$
we obtain the proper decomposition
	$ \syh= \Rbd + \seh$
of the centred $\TSS$ into the unbiased estimator of the
variability of covariates associated with the fixed effects and an unbiased 
estimator for the residual variation. 
Thus, an alternative expression for $\bar{R}^2$ in model (\ref{statmod})
follows as $\bar{R}^2  = \Rbd/(\Rbd + \seh)$. 
Of note, $\Rbd$ and with it $\bar{R}^2$ potentially produce negative estimates
($\bar{R}^2 \in [-k/(n-k-1), 1]$). 
This can be the case, for instance, if $R^2=0$ and $k > 0$ and if $\bh = 0$ 
with $\Sbh \neq 0$, i.\,e.\,covariates have low/null explanatory potential but 
estimation uncertainty.

\subsection{Components of Explained Variation in the Linear Model}
\label{sec: part}

A general partitioning of the variation of the (stochastic) LM is given by 
$  \ \Var(Y) = \sum_{j=1}^k \beta_j^2 \Var(X_j) + 
2 \sum_{j = 1}^k \sum_{i < j}\beta_i\beta_j \Cov(X_i, X_j) + \Var(\varepsilon)$, 
see, e.\,g.\,, \citet{Engelhart1936}.
The terms $\beta_j^2\Var(X_j)$, $j = 1, \ldots, k$, have been termed \textit{direct}
contributions of the respective variables, and the covariance terms have been called
\textit{joint} contributions of pairs of variables to the
variance of the outcome. 
While the sum of all these contributions results in $R^2$, the highly debated
issue is how to distribute the \textit{joint} contributions to obtain \textit{total}
contributions of each parameter.  
Assigning parts of the \textit{joint} distribution weighted by the proportion of
the corresponding variables' \textit{direct} contributions possibly constitutes the
first proposal to share the covariance terms \citep{Engelhart1936}.
One obvious, straightforward and data-independent approach is to assign each 
variable one part of its corresponding covariance terms implying equal shares 
as introduced by \citet{Hoffmann1960}. \citet{Pratt1987} justified the 
measure as the only one satisfying a set of desirable criteria. 
The so-obtained \textit{total} contributions can become negative, which has been 
criticised \citep{Groemping2015}.
However, variables might decrease the variance of the observations by a 
strong negative correlation with other variables, hence negative values are not
necessarily unjustified \citep{Pratt1987}. 
More complicated assignments including sequential sum of squares in combination with
iterative weighting (LMG) and data-dependent approaches such as the proportional
marginal variance decomposition (PMVD) have been reviewed in \citet{Groemping2007}. 
Both LMG and PMVD derive from game theory and are often only heuristicly 
discussed because the interpretation of the estimated quantities (estimand) is 
unclear \citep{Groemping2015}. 
Additionally, LMG depends on the ordering of the entering into the regression 
equation and is computationally challenging \citep{Groemping2015}.
Contrary to that, the Hoffman-Pratt measure is argued to be the only theory-based 
measure with simple means of estimation and straightforward interpretation 
\citep{Thomas2017}.

\section{Explained Variation in the Linear Mixed Model}
\label{subsec: LMM}

\subsection{Main Result - Decomposition}

The lemma below introduces unbiased estimators and predictors for empirical 
explained variances which constitute the estimands of interest in this paper, 
see also \citet{Schreck2018} and an earlier preprint of this manuscript. 
\begin{lemma}\label{empvars}
	Based on model (\ref{LMM}) and the quantities introduced in Lemma 
	\ref{blueblup}, the estimator $\Rb = \bht\Sxh\bh-\tr\big(\Sxh\Sbh\big)$
	is unbiased for the estimand for the explained variation of the coefficients 
	associated with fixed effects, the sample variance $\sigma_f^2 = \be^{\top}\Sxh\be$ 
	of $(\Xf\be)_i$ $(i=1,\ldots, n)$. 
	The predictor
	$\Ru =  \utt \Szh\ut-\tr\big(\Szh\Sut\big) + \tr\big(\Df\Szh\big)$
	is unbiased for the estimand for the explained variation of the coefficients 
	associated with random effects, the sample variance $\sigma_r^2 = \uf^{\top}\Szh \uf$ 
	of $(\Zf\uf)_i$ $(i=1,\ldots, n)$. 
	The predictor $\Rbu= 2\bht\Sxzh \ut $ is unbiased for the estimand for the 
	explained variation of the covariance induced by possible non-orthogonality of 
	the columns of $\Cf\Xf$ and $\Cf\Zf$, the sample covariance $2\be^{\top}\Sxzh \uf$.
	The quantities $\Rb$, $\Ru$, and $\Rbu$ are invariant to
	centring the data $\y$, $\Xf$, and $\Zf$. 
\end{lemma}
Later, \citet{Piepho2023} derived an estimator for the average semisquared bias 
for the mean structure which equals $\Rb$ in the case of the variance components
form of the LMM. 

The following theorem states
the decomposition of the centred total sum of squares. 
\begin{theorem}\label{main}
	Given restricted maximum likelihood estimators $\seh$ for the residual 
	variance $\se$ and $\suih$ for the variance components $\sui$
	$(i=1,\ldots, m)$ in model (\ref{LMM}), we find
	\begin{equation}\label{vardecomp}
		\syh= \Rbd + \Rud + \Rbud + \seh.
	\end{equation}
\end{theorem}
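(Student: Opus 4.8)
\emph{Proof plan.} The plan is to multiply (\ref{vardecomp}) by $n-1$, rewrite both sides through the projector associated with $\Hfd$, and reduce the claim to a trace identity that is forced by the restricted maximum likelihood equations. By the invariance asserted in Lemma \ref{empvars} (together with Lemma \ref{blueblup}), replacing $\y,\Xf,\Zf$ by $\Cf\y,\Cf\Xf,\Cf\Zf$ changes none of $\syh,\Rbd,\Rud,\Rbud,\bd,\ud,\Shbh,\Shut$; I therefore assume $\onest\Xf=\bm{0}$ and $\onest\Zf=\bm{0}$, so that $\Cf\Xf=\Xf$, $\Cf\Zf=\Zf$, $\Cf\y=\y$, $\Sxh=\Xft\Xf/(n-1)$, $\Szh=\Zft\Zf/(n-1)$, $\Sxzh=\Xft\Zf/(n-1)$, and the fitted intercept is $0$. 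Because $\Zft\ones=\bm{0}$ we get $\Hfd\ones=\ones$, hence $\Hfmd\ones=\ones$ and $\tXft\Hfmd\tXf=\diag\big(n,\ \Xft\Hfmd\Xf\big)$ is block diagonal; this decouples the intercept (yielding $\Shbh=\seh(\Xft\Hfmd\Xf)^{-1}$ from Lemma \ref{blueblup}) and shows that the projector
\begin{equation*}
\Pfhd:=\Hfmd-\Hfmd\tXf\big(\tXft\Hfmd\tXf\big)^{-1}\tXft\Hfmd=\big(\Hfmd-\Hfmd\Xf(\Xft\Hfmd\Xf)^{-1}\Xft\Hfmd\big)-n^{-1}\ones\onest
\end{equation*}
satisfies $\onest\Pfhd=\bm{0}$, $\Xft\Pfhd=\bm{0}$, $\Pfhd\Hfd\Pfhd=\Pfhd$, $\tr(\Hfd\Pfhd)=n-k-1$, and $\Zft\Pfhd\Zf=\Zft\big(\Hfmd-\Hfmd\Xf(\Xft\Hfmd\Xf)^{-1}\Xft\Hfmd\big)\Zf$; the last two identities give (again from Lemma \ref{blueblup}) $\Shut=\seh\Gfd\Zft\Pfhd\Zf\Gfd$.

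Next I would express the fitted quantities through $\Pfhd$. From Lemma \ref{blueblup} one obtains $\ud=\Gfd\Zft\Pfhd\y$, the conditional residual $\hat{\bm{\varepsilon}}_{c}:=\y-\Xf\bd-\Zf\ud=\Pfhd\y$, and the conditional fit $\hat\y:=\Xf\bd+\Zf\ud=\y-\Pfhd\y$. Since $\Xft\Pfhd=\bm{0}$ and $\onest\Pfhd=\bm{0}$, the vectors $\hat\y$ and $\hat{\bm{\varepsilon}}_{c}$ are centred and $\hat\y^{\top}\hat{\bm{\varepsilon}}_{c}=\utd\Zft\Pfhd\y=\utd\Gfd^{-1}\ud$, whence
\begin{equation*}
\y^{\top}\Cf\y=\hat\y^{\top}\hat\y+2\,\utd\Gfd^{-1}\ud+\hat{\bm{\varepsilon}}_{c}^{\top}\hat{\bm{\varepsilon}}_{c},\qquad \utd\Gfd^{-1}\ud=\y^{\top}\Pfhd(\Hfd-\Id)\Pfhd\y=\y^{\top}\Pfhd\y-\y^{\top}\Pfhd^{2}\y,
\end{equation*}
using $\Zf\Gfd\Zft=\Hfd-\Id$ and $\Pfhd\Hfd\Pfhd=\Pfhd$. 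On the other hand the definitions in Lemma \ref{empvars} give $(n-1)(\Rbd+\Rud+\Rbud)=\hat\y^{\top}\hat\y-\tr(\Xft\Xf\Shbh)-\tr(\Zft\Zf\Shut)+\tr(\Dfd\Zft\Zf)$. Subtracting and cancelling $\hat\y^{\top}\hat\y$, Theorem \ref{main} is equivalent to
\begin{equation*}
2\,\y^{\top}\Pfhd\y-\y^{\top}\Pfhd^{2}\y+\tr(\Xft\Xf\Shbh)+\tr(\Zft\Zf\Shut)=\tr(\Dfd\Zft\Zf)+(n-1)\seh.
\end{equation*}

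Now the REML equations enter. The restricted-likelihood stationarity condition for $\se$ yields $\y^{\top}\Pfhd\y=(n-k-1)\seh$; multiplying the stationarity conditions for the $\sui$ by $\suih/\seh$ and summing, and using $\sum_{i=1}^{r}(\suih/\seh)\Zf_i\Zf_i^{\top}=\Zf\Gfd\Zft=\Hfd-\Id$ together with $\Pfhd\Hfd\Pfhd=\Pfhd$ and $\tr(\Hfd\Pfhd)=n-k-1$, yields $\y^{\top}\Pfhd^{2}\y=\seh\tr(\Pfhd)$. Substituting these, together with $\Shbh=\seh(\Xft\Hfmd\Xf)^{-1}$, $\Dfd=\seh\Gfd$, $\tr(\Zft\Zf\Shut)=\seh\tr\big((\Hfd-\Id)^{2}\Pfhd\big)$ and $\tr(\Dfd\Zft\Zf)=\seh\tr(\Hfd-\Id)$, and dividing by $\seh$, the claim reduces to the purely algebraic identity
\begin{equation*}
\tr\!\big(\Xft\Xf(\Xft\Hfmd\Xf)^{-1}\big)+\tr\!\big((\Hfd-\Id)^{2}\Pfhd\big)-\tr(\Pfhd)-\tr(\Hfd-\Id)=2k+1-n,
\end{equation*}
which I would verify by expanding $(\Hfd-\Id)^{2}=\Hfd^{2}-2\Hfd+\Id$ and using $\tr(\Hfd\Pfhd)=n-k-1$, $\tr(\Hfd^{2}\Pfhd)=\tr(\Hfd)-1-\tr\big(\Xft\Xf(\Xft\Hfmd\Xf)^{-1}\big)$ (the $-1$ being the intercept block of $\diag(n,\Xft\Hfmd\Xf)$), and $\tr(\Hfd)-\tr(\Hfd-\Id)=\tr(\Id)=n$; the left side then telescopes to $n-1-2(n-k-1)=2k+1-n$.

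I expect the main obstacle to be the two preparatory steps rather than the final trace computation: carefully justifying the reduction to centred data so that the intercept block-diagonalises, and above all establishing $\y^{\top}\Pfhd^{2}\y=\seh\tr(\Pfhd)$, which is exactly the place where the REML equations for the \emph{variance components} (not merely for $\se$) are used. A minor further nuisance is the case of estimated variance components equal to zero, which I would dispatch by restricting to the positive block — equivalently, by keeping $\Zf\Gfd\Zft=\Hfd-\Id$ intact throughout, so that $\Gfd^{-1}$ never appears except paired against $\ud$.
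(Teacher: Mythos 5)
Your proposal is correct: I checked the reduction to centred data, the identities $\ud=\Gfd\Zft\Pfhid\y$, $\y-\Xf\bd-\Zf\ud=\Pfhid\y$, $\Shbh=\seh(\Xft\Hfmd\Xf)^{-1}$, $\Shut=\seh\Gfd\Zft\Pfhid\Zf\Gfd$, the two REML consequences $\y^{\top}\Pfhid\y=(n-k-1)\seh$ and $\y^{\top}\Pfhid\Pfhid\y=\seh\tr(\Pfhid)$, and the final trace identity (including $\tr(\Hfd^{2}\Pfhid)=\tr(\Hfd)-1-\tr\big(\Xft\Xf(\Xft\Hfmd\Xf)^{-1}\big)$, where the $-1$ indeed comes from the intercept block); everything closes, and the left side does telescope to $2k+1-n$. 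The ingredients are the same as in the paper — Lemma \ref{blueblup}, the pair of REML identities packaged in Lemma \ref{REML} (your attribution of the first to stationarity in $\se$ is legitimate under the $(\se,\gamma_1^2,\ldots,\gamma_r^2)$ parametrization the paper uses, and your weighted sum over the $\sui$ equations reproduces the second), and a concluding trace computation — but your organization is genuinely different. The paper keeps the data uncentred and runs the whole argument through the centring calculus $\Cf\Hfcmd\Cf$, $\Pfhcd$ (Lemmata \ref{CH}--\ref{CPC}), expanding each of the five terms $\Rbd,\Rud,\Rbud,\seh,\syh$ separately and cancelling leftover traces via $\tr(\Cf\Pfhcd\Cf)=\tr(\Pfhcd)-1$; you instead invoke the invariance statements (proved in Lemmata \ref{blueblup} and \ref{REML}, so not circular) to assume $\onest\y=\bm{0}$, $\onest\Xf=\bm{0}$, $\onest\Zf=\bm{0}$, which block-diagonalises the intercept and lets a single projector $\Pfhid$ carry the argument, and you package the quadratic forms through the split $\y=\hat\y+\Pfhid\y$ with the non-vanishing cross term $\hat\y^{\top}\Pfhid\y=\y^{\top}\Pfhid\y-\y^{\top}\Pfhid\Pfhid\y$. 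What your route buys is transparency (one projector, one final trace identity, and a clear view of exactly where the variance-component REML equations are indispensable); what the paper's route buys is that it never needs the WLOG reduction and produces the centring identities that are reused elsewhere (e.g.\ for Lemma \ref{blueblup} and Lemma \ref{REML} themselves). Two cosmetic cautions: your symbol $\Pfhd$ denotes what the paper calls $\Pfhid$ (the paper reserves $\Pfh$ for the projector without the intercept column), and your handling of possibly singular $\Gfd$ by keeping $\Zf\Gfd\Zft=\Hfd-\Id$ intact is the right fix and should be stated where $\Gfd^{-1}\ud$ first appears rather than only at the end.
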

Decomposition $ \syh= \Rbd + \seh$ and the deduction and interpretation of $\Rbd$
follow in the special case of the LM as presented in Section \ref{sec: LM_R2}.
In the special case of a random effects model with only one 
random effect vector and no fixed effects but the intercept, decomposition 
(\ref{vardecomp}) reduces to $ \syh = \Rud + \seh$, 
which has been observed, but not 
been proven, by \citet{Schreck2019} in several genomic applications. 
In that context, $\Rud$ has been termed empirical best predictor of the additive 
genomic variance \citep{Schreck2019}. \newline
Due to the explicit nature of the estimand of the different variance contributions
expressed in Lemma \ref{empvars}, their interpretation is intuitive. 
In particular, the explained variations (estimands) originate from sample variations of 
the covariate information, no matter whether their associated effects are 
fixed or random. 
This reflects that the data-generating processes of $\Xf$ and $\Zf$ are vital drivers of
the underlying variability. 
While for the mean structure variability is induced only through the data-generating 
process of the covariate corresponding to fixed effects, in the case of random 
effects both the associated data-generating process and the effects themselves 
induce variability, which is clearly reflected within $\Rud$.
The proper decomposition of the variability of the observations in (\ref{vardecomp}) was
made possible by the inclusion of the estimates of the realizations of the random effects. 
The latter are viewed as parameters to be estimated, reflecting an empirical best prediction 
approach (or parametric empirical Bayes, if the prior is Gaussian) as advocated, for 
instance, in \citet{Morris1983}, \citet{Robinson1991}, \citet{Xu2003}.
Consequently, the corresponding explained variation is not solely concerned with the 
contribution of \textit{population averages} but also of \textit{data set specific} 
realizations, as further discussed in the next section. 
Possible correlations between the data-generating processes of $\Xf$ and $\Zf$ 
are captured by $\Rbud$.
Further interpretations follow in the next sections and in
the applications. 

\subsection{Population and Data set Specific Explained Variation}
\label{sec: dsd}

The empirical predictor $\Rud$ for the explained variation associated with the 
random effects, see Lemma \ref{empvars} and Equation (\ref{vardecomp}),
can be further decomposed as
\begin{equation*}
	\Ru = \Ru{}^{p} + \Ru{}^{d}, \quad \Ru{}^{p} = \tr\big(\Df\Szh\big), 
	\quad \Ru{}^{d} = \utt \Szh\ut -\tr\big(\Szh\Sut\big),
\end{equation*}
into what we term \textit{population-specific} (indexed by $p$) and 
\textit{data-specific} (indexed by $d$) explained variation. 
The terminology and interpretation are based on the following reasoning. 
\newline
Similar to equation (\ref{ELM}), the expectation of the quadratic 
form of the observations is
\begin{equation}\label{ELMM}
	\E\big[\syh \big] = \frac{1}{n-1}\tr\Big( \Cf \big(\Cov(\y) - 
	\E[\y]\E[\y]^{\top} \big) \Big)= 
	\se + \be^{\top} \Sxh \be + \tr(\Df \Szh). 
\end{equation}
Whereas bias-reduced estimators for $\se$ and $\be^{\top} \Sxh \be$ 
are directly reflected by $\seh$ and $\Rbd$ in decomposition (\ref{vardecomp}),
the relationship of $\Rud$ and $\Rbud$ with Equation (\ref{ELMM}) might not
be obvious. 
From Lemma \ref{empvars}, recall that the sample variance 
$\sigma_r^2 = \uf^{\top}\Zft \Cf \Zf \uf / (n-1)$
of the random vector $\Zf \uf$
is itself a random variable in the context of model (\ref{LMM}). 
The unconditional expectation
$\E[ \sigma_r^2 ] = \tr(\Df \Szh)$
represents the last term in Equation (\ref{ELMM}).
Similar to the best linear unbiased predictor $\E[\uf \,|\,\y]$ of the random 
effects $\uf$, the conditional expectation is 
\begin{align*}
	\E\big[ \sigma_r^2 \,|\,\y \big] = 
	\tr\big( \E \big[\uf \uf^{\top} \,|\, \y \big] \Szh \big)& = 
	\E \big[\uf^{\top} \,|\, \y \big] \Szh \E \big[\uf \,|\, \y \big] + 
	\tr(\Df \Szh) - \tr( \Sut \Szh),  
\end{align*}
where the last equality holds because of the variance components form of
model (\ref{LMM}).

The empirical predictor $\Rud$ directly estimates $ \E\big[ \sigma_r^2 \,|\,\y \big]$.
This conditional expectation includes the unconditional expectation $\tr(\Df \Szh)$
and the BLUP's of the random effects as a measure of realized effect size, 
similarly to the fixed effects estimates in the variance explained by covariates 
associated with fixed effects. 
Conditioning on the data, it utilizes the information provided by the data $\y$.
From a Bayesian perspective, $\Rud$ can be interpreted as the posterior
expectation of explained variation of covariates associated with random
effects, also called parametric empirical Bayes estimator \citep{Morris1983}. 
Compared to the unconditional expectation (or the prior expectation 
$\tr(\Df \Szh)$), the conditional expectation has superior 
properties as an estimator of the realizations of a random variable 
\citep[e.\,g.\,]{Searle1992}. 
Consequently, we prefer $\Rud$ to $\tr(\dot{\Df} \Szh)$ as an estimator of
the realization of $\sigma_r^2$. 
The terminology \textit{population-specific} explained variation for $\Ru{}^{p}$
originates from the derivation as the unconditional expectation of $\sigma_r^2$
and the implied interpretation as the mean of the hypothetical population.
The deviation of the conditional from the unconditional expectation, $\Ru{}^{d}$, 
has expectation $0$ with respect to the population.
Hence, we call $\Ru{}^{d}$ \textit{data-specific} explained 
variation.  

For instance in longitudinal studies with fixed and random time effect, 
$\Rb$ can be viewed as the explained variation associated
with the covariates associated with the mean effect of time. 
The corresponding explained variation associated with random slopes can be split
into $\Ru{}^{p}$ which resembles the average of the individual variation
around the fixed slope in the hypothetical population. 
The additional deviations of the variation in the underlying data set 
given the information of the sampling population are accounted for by
$\Ru{}^{d}$.

In Section \ref{sec: real}, we illustrate that the total \textit{population-specific}
variance often contributes the largest part to the explained 
variation associated with random effects and their respective covariates. 
The \textit{data-specific} explained variation seems to contribute strongly
if the covariates associated with random effects are highly correlated and is the 
main driver in our novel parameter-wise dispersion relevance profile, making 
individual covariates more distinguishable. 
These individual contributions are, however, not restricted to be positive, and
often do not contribute largely in sum to the overall explained variation if the
covariates are uncorrelated. 

The empirical covariance $\be^{\top}\Sxzh \uf$ between the vector $\Xf \be$ and 
$\Zf \uf$ always has unconditional expectation (\textit{population-specific} explained 
variation) $0$ and conditional expectation (\textit{data-specific} explained 
variation) $\Rbu$.
For this reason, the cross-terms do not contribute in Equation (\ref{ELMM})
but do so in decomposition (\ref{vardecomp}).

\subsection{Coefficients of Determination}\label{partialR2_LMM}

The explained variations defined in Lemma \ref{empvars} depend on the amount of 
observed variation. 
Coefficients of determination are often defined as explained variance scaled with 
respect to the total observed variation or the residual variance of a reference model, 
respectively, in order to enable comparability of variation explained in different
settings.  
In Section \ref{sec: LM_R2}, the reference model is the intercept-only model where the
residual variance coincides with $\syh$, leading to a consistent scaling.  
In general, however, there need not be a unique reference model, leading to a 
variety of potential $R^2$'s. 

In accordance to $\bar{R}^2$ in Section \ref{sec: LM_R2}, we propose to 
utilize decomposition (\ref{vardecomp}) to define
\begin{equation}\label{R2lmm}
	\Rlmm = \frac{1}{\syh}\Big(\Rbd + \Rud + \Rbud\Big) =
	\frac{\Rbd + \Rud +  \Rbud}{\Rbd + \Rud + 
		\Rbud+\seh} = 1 - \frac{\seh}{\syh}
\end{equation}
as the \textit{adjusted} coefficient of determination for the LMM (\ref{LMM}). 
The terminology refers to the fact that (\ref{R2lmm})
reduces to $\bar{R}^2$, see Section \ref{sec: LM_R2}, and uses bias-corrected 
estimates by adjusting for estimation uncertainty of the effects sizes.  
The total observed variance or the residual variance in the model with only
a fixed intercept (reference model), implying all observations to be 
uncorrelated, is used for scaling. 
The $\Rlmm$ has, by definition, the interpretation as proportion of explained variance, 
and as the additive inverse to the coefficient of alienation 
(proportion of unexplained variance). 
Therefore, (\ref{R2lmm}) is in line with the $r_0^2$ introduced in \citet{Xu2003}, 
which also reduces to the adjusted coefficient of determination in the LM in the 
case that REML estimation is used \citep{Xu2003}. 
Our main focus, however, is on explicitly investigating and interpreting the 
explained variation in detail. 

The $\Rlmm$ is an analytic function of the estimates returned after 
a model fit with REML, making its computation straightforward. 
Due to the explicit and additive nature of decomposition (\ref{vardecomp}), 
we can easily define proportions of variance explained by the coefficients associated
with fixed or with random effects. 
Due to Lemma \ref{empvars}, the quantity $\Rlmm$ is dimensionless, 
i.\,e.\,it is invariant to scaling and centring the data 
$\y$, $\Xf$, and $\Zf$. 

Alternatively, based on the further decomposition given in Section
\ref{sec: dsd}, we can define a \emph{population coefficient of determination}
\begin{equation}\label{R2lmmp}
	\Rlmm_{p} = \frac{\Rbd + \sum_{i=1}^m\suih\tr\big(\Szih\big)  }
	{\Rbd + \sum_{i=1}^m\suih\tr\big(\Szih\big) + \seh}
\end{equation}
which constitutes the relative population average explained variation. 
The scaling has been chosen to resemble the unconditional expectation, see
Section \ref{sec: dsd}.
However, the implied null model is not obvious. 
As mentioned above, the estimator for the average semisquared bias 
in \citet{Piepho2023} for the mean structure equals $\Rbd$ in the variance 
components form of the LMM.
Naturally, the coefficient of determination introduced in \citet{Piepho2023} 
is identical to $\Rlmm_{p}$ in (\ref{R2lmmp}) already introduced in an earlier
preprint.

The widely employed marginal ($R^2_{m}$) and conditional 
($R^2_{c}$) coefficients of determination by \citet{Nakagawa2013} and 
\citet{Johnson2014} are given by
\begin{equation}\label{Rnaka}
	R^2_{m} = \frac{\hat{\sigma}_{f}^2 }{ \hat{\sigma}_{f}^2 + \hat{\sigma}^2_l + 
		\seh }, \quad 
	R^2_{c} = \frac{\hat{\sigma}_{f}^2 + \hat{\sigma}^2_l }{ \hat{\sigma}_{f}^2 + 
		\hat{\sigma}^2_l + \seh }
\end{equation}
where $\sigma^2_l = \tr(\Zf\Df \Zf^{\top} )/n $ is termed the
mean random effect variance, resembling the unconditional estimate in Section 
\ref{sec: dsd}.  
The term $\hat{ \sigma }_{f}^2 = \bht \Sxh \bh / n$ is used as an estimator
for $\sigma_{f}^2 = \Var(X \be )$.
This makes $R^2_c$ structurally similar to the population coefficient $\Rlmm_{p}$,
but not $\Rlmm$. 
In the proof of Lemma \ref{empvars}, we show that the estimator 
$\hat{ \sigma }_{f}^2$ for the fixed effect 
variance used in \citet{Nakagawa2013} is biased. 
The main differences of (\ref{Rnaka}) and (\ref{R2lmm})
are that we use a bias-corrected, variance adjusted estimator, $\Rbd$, for
the explained variance of the mean structure and that we include not only
\textit{population averages} for the covariance structure, but also 
\textit{data set specific} deviations. 
These deviations might be present for the coefficients of random effects but
also in terms of correlations between the coefficients associated with fixed and
random effects (given by $\Rbud$). 
The \textit{data set specific} contributions are in particular vital for parameter-wise 
relevance profiles, as illustrated, for instance, in Figure 
\ref{fig: arab5} and described in detail in the next section. 

In GWAS, the simultaneous effect of markers is typically modeled using the
variance components form of a random effect model, 
$\y = \mu \ones + \Zf \uf + \varepsilon = \mu \ones + \mathbf{g} + \varepsilon, 
\quad \mathbf{g} \sim \No(0, \su \Zf\Zft)$,
where $\Zf$ denotes the design matrix of the genomic markers \citep[e.g.][]{Yang2010}. 
The most popular approach for the estimation of the relative genomic variance
has been introduced by \citet{Yang2010} who use the estimator 
$\hat{\sigma}_{g}^2/\hat{\sigma}_y^2:= \suh\tr(\Szh)/\hat{\sigma}_y^2$. 
It is straightforward to see that this resembles the unconditional expectation, 
see Section \ref{sec: dsd}, and that $\suh\tr(\Szh)$ is similar to $\hat{\sigma}^2_l$ 
in Equation (\ref{Rnaka}). 
This estimator is not in accordance to classical quantitative genetics theory, see
\citet{Schreck2019} and the references therein. 
The estimator $\Rud$, however, is in accordance to 
quantitative genetics theory and in particular includes the contribution of 
linkage disequilibrium \citep{Schreck2019}. 
Consequently, $\Rud$ can be viewed as an improved estimator of the genomic
variance while being defined in the context of model (\ref{LMM}),
thus allowing for fixed effects and an arbitrary number of independent random effects. 
Heritability estimates, i.\,e.\,proportion of genomic variance, can be obtained 
by coefficients of determination based on (\ref{R2lmm}).

\subsection{Components of Explained Variation}
\label{sec: part2}

We believe that partitioning the explained variation is the best 
and possibly only option for a simultaneous assessment of parameter-wise relevance 
for covariates associated with fixed and random effects in the LMM.  
To this end, we utilize a similar reasoning as \citet{Hoffmann1960} and \citet{Pratt1987}, 
see Section \ref{sec: part}, to further partition the explained variations 
in decomposition (\ref{vardecomp}) to parameter-wise 
components. 
We assign equal shares of the \textit{joint} contribution to
the \textit{direct} contribution of parameters to obtain their
\textit{total} contribution. 

We partition the bias-corrected $\Rbd$, see Lemma \ref{empvars}, into the 
$k$ contributions of the covariates associated with fixed effects
$\Rbd= \sum_{j=1}^{k} \Big\{(\Sxh)_{jj} \Big[ \bh_j^2 - (\Sbh)_{jj} \Big]
+ \sum_{i\neq j}(\Sxh)_{ij} \Big[ \bh_i\bh_j - (\Sbh)_{ij} \Big] \Big\}$.
We attribute the \textit{data set specific} term $\Rbud$ to each of the $k$ 
fixed effects and to each of the $m$ random effect vectors by summing over the 
contributions of the covariances with the random and fixed effects, respectively
$\Rbud= 
\sum_{j=1}^{k} \bd_j \sum_{i=1}^{m} (\Sxzih \ud_i)_j + 
\sum_{i=1}^{m}\btd\Sxzih \ud_i$.
In this way, a certain contribution of data set specific explained variance can be 
attributed to each covariate, whether associated with a fixed or random 
effect. 
To sum up, we assign the $j$-th covariate associated with a fixed effect 
the share
\begin{equation}\label{Rxexplicit}
	\Rbdj= \Big\{(\Sxh)_{jj} \Big[ \bh_j^2 - (\Sbh)_{jj} \Big]
	+ \sum_{i\neq j}(\Sxh)_{ij} \Big[ \bh_i\bh_j - (\Sbh)_{ij} \Big] \Big\} + 
	\bd_j \sum_{i=1}^{m}(\Sxzih \ud_i)_j. 
\end{equation}
including \textit{data set specific} contributions.

It is straightforward to partition the \textit{population} explained 
variance $\sum_{i=1}^{m}\suih\tr\big(\Szih\big)$ to specific random effects. 
The \textit{data set specific} part can be partitioned using
$ 
\sum_{i=1}^{m}\Big\{ \tr(\Szih[ \ud_i \ud_i^{\top} - \Shuti] ) + 
\sum_{j\neq i}  \tr(\Szih[ \ud_i \ud_j^{\top} - \Shuiujt] )    \Big\}$.
Similar to (\ref{Rxexplicit}), we define the contribution of the $i$-th covariate
associated with random effects to the explained variance including data set
specific deviations as 
\begin{equation}\label{Rzexplicit}
	\Rudi = \suih\tr\big(\Szih\big) + 
	\tr(\Szih[ \ud_i \ud_i^{\top} - \Shuti] ) + 
	\sum_{j\neq i}  \tr(\Szih[ \ud_i \ud_j^{\top} - \Shuiujt] ) + \btd\Sxzih \ud_i.
\end{equation}

For instance in Table \ref{tab:income}, we illustrate how the parameter-wise 
contributions (\ref{Rxexplicit}) and (\ref{Rzexplicit}) 
lead to a simultaneous dispersion relevance assessment of all model covariates 
associated with fixed effects and random effects on the same scale. 
Relative parameter-wise shares of explained variations are obtained by 
division as in equation (\ref{R2lmm}). 

By similar arguments, the contribution of the covariates $\Zf_i$, 
see Equation (\ref{Rzexplicit}), can be even further partitioned to each 
column of $\Zf_i$. 
This can be applied, for example, to obtain relevance profiles for individuals
genomic markers which share the same random effect vector, as exemplified in 
Section \ref{subsec: gwas} and visualized in Figure  \ref{fig: arab5}.

\section{Real Data Applications}\label{sec: real}

\subsection{Hierarchical Data - Sleep Deprivation Study}
\label{subsec: sleep}

\citet{lme4} investigate reaction time per day for 
subjects in a longitudinal sleep deprivation study.
For each of the $18$ subjects, the response variable \textit{Reaction} 
defined as the average reaction times in milliseconds is collected on the 
same $10$ days, implying a balanced design.  
The amount of sleep is unrestricted on day $0$ and restricted to $3$ hours 
afterwards.
By means of a likelihood ratio test and difference in deviance, \citet{lme4} 
study whether a random slope for \textit{time} should be added to the 
basic model including \textit{time} as a covariate with fixed effect and a 
random subject-specific intercept (\textit{ID}).
They conclude that "adding a linear [random] effect of time uncorrelated 
with the [random] intercept leads to an enormous and significant drop in 
deviance". 
While this may be convincing from a modeling perspective, the magnitude of 
a deviance drop is inherently difficult to interpret.

We fit model (\ref{LMM}) and use Equation (\ref{vardecomp}) to additively decompose 
the observed variance of \textit{Reaction} ($\syh=3172.93$) into the estimate 
$\Rbd=888.78$ ($\approx 28\%$) for the adjusted explained variation of \textit{time} 
associated with a fixed effect, an estimate of the unbiased predictor $\Rud=1630.57$ 
($\approx 51\%$) for the explained variation of \textit{time} and \textit{ID} associated 
with random effects and the REML estimate $\seh=653.58$ ($\approx 21\%$) for the residual 
variance. 
The estimate of 
$\Rbud$ is negligible here ($< 0.01$), similar to 
$\Rud{}^{d}$. 
This yields 
$\Rlmm\approx 0.79$ 
which is very close to the population coefficient
$\Rlmm_{p}$. 

\bigskip

\begin{table}[ht]
	\singlespace
	\captionsetup{font=small}
	\small
	\caption{\doublespacing Sleep Deprivation Study. 
		$\dot{R}^2_{(\cdot)}$ denotes parameter-wise contributions to dispersion relevance, 
		as defined in equation and (\ref{Rxexplicit}) and (\ref{Rzexplicit}), 
		relative to $\syh$.
		We include $95\%$ parametric bootstrap confidence intervals.
		\label{tab:sleepstudy}}
	{\tabcolsep=4.25pt
		\begin{tabular}{@{}llrrrrrr@{}}
				Effect Type & Covariable & Estimate & Std. Error & Pr($>$ $|$t$|$) & 
				$\dot{R}^2_{(\cdot)}$ (in \%) & 2.5\% & 97.5\% \\
			\hline
			Fixed & (Intercept) & 251.41 & 6.89 & $<$ 0.001 & -  & - & - \\ 
			& \textit{time} & 10.47 & 1.56 & $<$ 0.001 & 28.01 & 15.43 & 44.48 \\ 
			Random & \textit{ID} & $\dot{\sigma}_{u_1}^2/\syh$~ = ~0.20 & - & - & 19.53 &  4.56 & 
			33.77 \\ 
			& \textit{time} & $\dot{\sigma}_{u_2}^2/\syh$~ = ~0.01 & - & - & 31.86 & 14.04 & 
			48.87\\ 
			Residual &          & $\dot{\sigma}_{\varepsilon}^2/\syh$~ = ~0.21 & - & - & 20.60 & 14.10 
			& 32.04 \\ 
	\end{tabular}}
\end{table}

In Table \ref{tab:sleepstudy} we present the relative
contributions of the individual covariates to the explained variance. 
This enables us to compare the relevance of all covariates on the same 
scale, which is not possible by using only the standard output including
effect estimates, standard errors, p-values and variance component estimates 
(relative to $\syh$). 
\textit{Time} as a covariate associated with a random slope explains 
about $32\%$ of the observed variation in \textit{Reaction}, substantiating the
result from the likelihood ratio test.
\textit{ID} contributes similarly as
the residual variance (about $20\%$ each) to the observed variance.
Notably, when comparing only the variance components of the covariates associated
with random effects, the variance of \textit{ID} is about $20$ times the variance
of \textit{time}. 
Clearly, this naive approach ignoring the variance $\Szh$ of the data-generating 
process of \textit{time} is not advisable. 
\textit{Time} as a covariate associated with a fixed effect contributes 
similarly (about $28\%$) to the covariate \textit{time} associated with a random slope.  
Thus, variability in \textit{Reaction} explained by \textit{time}
associated with both a fixed and random effect amounts to a total of about
60\%, which is considerably more than the subject-specific explained variation. 

In sum, explained variances are more intuitively 
interpretable than differences in deviance, capturing relevance of the 
covariates rather than statistical significance in sequential testing. 
The novel importance ranking provides complementary 
information to the standard LMM model output by assessing the relevance 
of all covariates on a common scale.

\subsection{Multilevel Data - Income Inequality in Burkina Faso}
\label{subsec: bf}

\citet{Graeb2011} investigate income inequality in Burkina Faso in the year 1994. 
At that time, the country was organized in 13 regions, 45 provinces and 434 
communities. 
Household data were drawn from nationwide household surveys covering $n = 8374$ 
households. 
The overall aim of \citet{Graeb2011}'s study was to ``[\ldots] provide a 
decomposition of the variance of observed living standards into components due 
to factors that vary within and factors that vary between spatial units'' 
to potentially ``[\ldots] help to target poverty reduction policies more 
effectively''. 
Thus, the authors were not primarily interested in estimating regression coefficients, 
but in decomposing the variance of the logarithm of \textit{household expenditure}
as a proxy for income inequality.
Consequently, the components of explained variation rather than the coefficients 
were actually the estimand of interest.
\citet{Graeb2011} use nested random intercepts to model the spatial 
(co)-variation due to \textit{communities}, \textit{provinces} and \textit{regions}
while including household and community characteristics as covariates associated 
with fixed effects. 
Their model M2 utilizes selected covariates on the household level, including
the household size (\textit{hh.size}), the kid-per-adult ratio (\textit{hh.kid.adult}), 
the youth-per-adult ratio (\textit{hh.youth.adult}), the age of the household head
(\textit{hh.age.head}), whether the household head was literate (\textit{hh.lit.head}), 
the literacy percentage of adult household members (\textit{hh.lit.adult}) and
whether the household head was christian (\textit{hh.christian}).
Selected community level covariates include the ethnic fragmentation in the community
(\textit{com.ethnic}), the adult literacy fraction in the community 
(\textit{com.lit.adult}), the community youth-adult ratio (\textit{com.youth.adult}), 
whether at least one household in the community has electricity (\textit{com.electricity})
and whether the community was urban (\textit{com.urban}). 
The authors utilize the ICC to judge the contribution of each spatial level to the 
variance in the outcome. 
However, the ICC ignores the variation induced by the covariates associated with fixed 
effects, additionally to ignoring the variance of the data-generating process of the 
covariates associated with random effects. 
The variation attributable to the mean structure is determined by the proportional change
of the ICC in a pure random intercept model compared to the ICC in a LMM with the
fixed effects.
Clearly, this is a rather heuristic approach, and a simultaneous modeling of the 
variance due to covariates associated with fixed and random effects is not 
possible this way.  

We fit model (\ref{LMM}) to represent M2 and illustrate how our
novel variance decomposition provides a formal approach to simultaneously judge the 
contribution of each spatial level and each household and community level covariate
to the variability of the logarithm of \textit{household expenditures}. 

We additively decompose 
$\syh = 0.883$ of the logarithm of \textit{household expenditures}. 
The estimate $\Rbd$ equals 
$0.375$ ($\approx 42\%$).
The estimate of the unbiased predictor for the explained variation of \textit{community},
\textit{province}, and \textit{region} associated with random effects is $\Rud=0.050$ 
($\approx 6\%$). 
The \textit{data set specific} explained variance $\Rud{}^{d}$ is $0.005$ 
($\approx 0.4\%$) and the
\textit{population} explained variance is $\Ru{}^{p}$ is $0.05$ ($\approx 5.6\%$). 
The estimate of the \textit{data set specific} covariance term $\Rbud$ represents
a substantial contribution of $0.058$ ($\approx 6.6\%$ which should not be ignored. 
The REML estimate $\seh=0.395$ ($\approx 45\%$) for the residual 
variance. 
This yields an adjusted coefficient of determination $\Rlmm\approx 55\%$, see
definition (\ref{R2lmm}).
The population coefficient $\Rlmm_{p}$ in equation (\ref{R2lmmp}) 
amounts to $\approx 52\%$ which approximately equals  $R_c^2$. 

In Table \ref{tab:income} we present relative contributions of the individual covariates, 
including \textit{data set specific} contributions, to the explained variance.
The combined explained variation by the spatial levels is about $9.5\%$ including
its share of $\Rbud$, see Equation (\ref{Rzexplicit}). 
The covariates associated with fixed effects that are most relevant with respect to
dispersion are household size ($8.2\%$) and variables associated with literacy
($22.6\%$ combined with bootstrap confidence interval $(19.6\%; 25.8\%)$ ). 
Characteristics on the household level explain a combined amount of $\approx 25\%$
and the characteristics on the community level explain about $24\%$ including
\textit{community} and its random effect.

\bigskip

\begin{table}[ht]
	\singlespace
	\captionsetup{font=small}
	\small
	\caption{\doublespacing
		Inequality in Burkina Faso.
		$\dot{R}^2_{(\cdot)}$ denotes parameter-wise contributions to dispersion relevance, 
		as defined in equation and (\ref{Rxexplicit}) and (\ref{Rzexplicit}), 
		relative to $\syh$.
		We include $95\%$ parametric bootstrap confidence intervals. 
		\label{tab:income}}
	{\tabcolsep=4.25pt
		\begin{tabular}{@{}llrrrrrr@{}}
				Effect Type & Covariable & Estimate & Std. Error & Pr($>$ $|$t$|$) & 
				$\dot{R}^2_{(\cdot)}$ (in \%) & 2.5\% & 97.5\%\\
			\hline
			Fixed & (Intercept) & 11.13 & 0.05 & $<$ 0.001 &       &  &  \\ 
			& \textit{hh.size} & -0.04 & 0.001 & $<$ 0.001 &   8.24 &  7.18 &  9.35 \\
			& \textit{hh.kid.adult} & -0.05 & 0.006 & $<$ 0.001 &   0.86 &  0.58 &  1.21 \\
			& \textit{hh.youth.adult} & -0.05 & 0.006 & $<$ 0.001 &  0.66 &  0.42 &  0.99 \\
			& \textit{hh.age.head} & -0.004 & $<$ 0.001 & $<$ 0.001 &   2.06 &  1.49 &  2.69 \\
			& \textit{hh.lit.head} & 0.35 & 0.023 & $<$ 0.001 &  7.96 &  6.88 &  9.30 \\
			& \textit{hh.lit.adult} & 0.20 & 0.023 & $<$ 0.001 &   4.96 &  3.86 &  6.13 \\
			& \textit{hh.christian} & 0.03 & 0.019 & 0.071 &   0.39 & -0.02 &  0.85 \\
			& \textit{com.ethnic} & 0.35 & 0.059 & $<$ 0.001 &   3.39 &  2.21 &  5.09 \\
			& \textit{com.lit.adult} & 0.50 & 0.070 & $<$ 0.001 &   9.68 &  6.96 & 12.67 \\
			& \textit{com.youth.adult} & -0.09 & 0.041 &  0.035 &   0.42 &  0.03 &  0.92 \\
			& \textit{com.electricity} & 0.14 & 0.041 & $<$ 0.001 &   3.05 &  1.30 &  4.95 \\
			& \textit{com.urban} & -0.15 & 0.046 & 0.001 &   4.05 &  1.25 &  6.21 \\
			Fixed total &    &  &  &  & 45.74 & 43.40 & 48.18 \\ 
			Random & \textit{community} & $\dot{\sigma}_{u_1}^2/\syh$ = 0.035 & - & - & 3.72 & 
			2.88 & 4.81 \\ 
			& \textit{province} & $\dot{\sigma}_{u_2}^2/\syh$ = 0.006 & - & - &   1.49 & -0.44 &  
			2.51 \\ 
			& \textit{region} & $\dot{\sigma}_{u_3}^2/\syh$ = 0.017 & - & - &   4.31 & -1.22 &  6.12 \\ 
			Random total &   &  &  &  &  9.52 &  2.50 & 11.00 \\ 
			Residual &  & $\dot{\sigma}_{\varepsilon}^2/\syh$ = 0.447 & - & - & 44.74 & 43.44 & 
			52.66 \\ 
	\end{tabular}}
\end{table}

\begin{figure}[ht]
	\singlespace    \centering
	\captionsetup{font=small}
	\small
	\includegraphics[width = 1.0\textwidth]{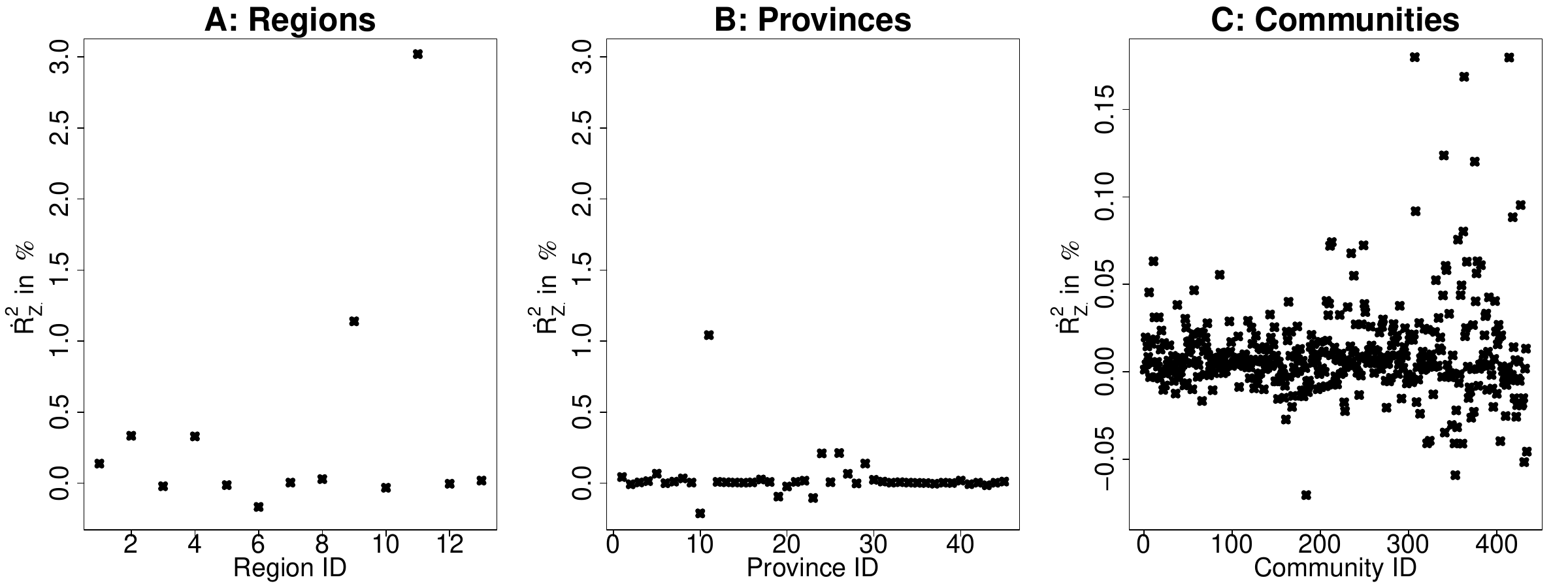}
	\caption{\doublespacing Parameter-wise Explained Variations for Burkina Faso Spatial Levels. 
		A: Region 11 (Centre/Ouagadougou) stands out. 
		B: Province 11 (Centre/Ouagadougou) stands out 
		C: Several communities stand out. Community IDs could not be assigned.  
		Note: The $y$-scale in C is smaller than in A and B. 
		\label{fig: bf}}
\end{figure}

In Figure \ref{fig: bf}, we illustrate the parameter-wise relevance profiles of the 
spatial levels \textit{region}, \textit{province} and \textit{community}. 
The \textit{region} as well as the \textit{province} with ID $11$ contribute 
prominently to the explained variance compared to the other \textit{regions} and
\textit{provinces}.
Interestingly, we identified ID 11 as Centre, which is the region in 
Burkina Faso with the highest population density and with the only large town in the 
country (Ouagadougou, the capital city). 
Unfortunately, we were not able to assign the \textit{community} ID's.

Our novel variance decomposition provides a formal tool to simultaneously
judge the relevance of hierarchical levels and covariates associated with fixed 
effects on the same scale. 
Thus, we were able to contrast the sum of contributions of multiple fixed effects 
covariates to the contribution of random effects capturing spatial structure. 
Income inequality, which is measured by the variability in household expenditures, 
could be associated with literacy as a main contributor while the spacial levels 
contributed less. 
Further, relevance profiles for \textit{province} and \textit{region} revealed that
the region ``Centre/Ouagadougou'' contributed the most to the spatial inequality 
while accounting for covariates associated with fixed effects.

\subsection{High-Dimensional GWAS - Arabidopsis Thaliana}
\label{subsec: gwas}

GWAS focus on finding associations between complex phenotypes 
and marker loci often spread on the whole genome.
In this context, model (\ref{LMM}) is widely applied to fit all markers 
on the genome simultaneously \citep[e.\,g.\,]{Zhou2012}. 
This enables a more realistic analysis than marker-wise comparisons because
a combined model allows for controlling confounding effects of the often
highly correlated markers.
It is of central interest to identify genetic variants that 
explain variation in a phenotype \citep{McCarthy2008}, as well as to attribute 
the observed variation to specific covariates, in particular to certain regions
of the genome such as chromosomes or even single
markers \citep[e.\,g.\,]{Pimentel2011}. 
This reinforces the need for a dispersion relevance profile in the LMM, 
in particular for genomic markers, which are represented by covariates 
associated with random effects.

The publicly available data set for the model organism  
``Arabidopsis thaliana'' \citep{Consortium2016} 
comprises $n=1057$ lines which were genotyped for $p=193,697$ 
single nucleotide polymorphism (SNP) markers split on $m=5$ chromosomes
($p_1=47,518$, $p_2=25,550$, $p_3=38,813$, $p_4=33,240$, $p_5=48,576$). 
We relate the phenotype flowering time at $10 {}^{\circ} C$, \textit{ft10}, 
to the marker genotypes on the $m=5$ chromosomes of the organism to investigate
the proportion of variation of \textit{ft10} each chromosome, and subsequently 
each individual SNP, explains.
To this end, we fitted a LMM (using R package \texttt{sommer} \citep{Pazar2017}) 
in which we model each chromosome as a
covariate associated with a random effect vector (of size $p_i$, $i = 1, \ldots, 5$). 
We include a fixed intercept, but no other covariates associated with fixed effects. 

The REML estimate of the unexplained variation
and the estimate of $\Rud$ contributed $7.75\%$ and $92.25\%$ to the 
sample variance of \textit{ft10}, respectively. 
The \textit{population} explained variation amounted to $47.80\%$ together with large
\textit{data set} specific contributions ($44.46\%$ in sum). 
The latter has already been shown to be the contribution of linkage disequilibrium, 
i.\,e.\,when covariates (markers) are correlated \citep{Schreck2019}. 

In Table \ref{tab: arab}, we give a detailed overview of the explained variation
between and among the five chromosomes, as derived in Equation (\ref{Rzexplicit}).
Chromosome five explains by far the largest amount of variance of \textit{ft10}
compared to the other chromosomes. 
In particular, the first three chromosomes explain very little variance by 
themselves. 
About one third of the explained variation can be attributed to effects 
between pairs of chromosomes where pairs that contain chromosome five contribute
the largest. 
Based on this high-level overview, we investigate single chromosomes in 
greater detail, in particular chromosome five. 

\bigskip

\begin{table}[ht]
	\singlespace
	
	\captionsetup{font=small}
	\small
	\caption{\doublespacing \textit{Arabidopsis thaliana}: Contributions of 
		each chromosome to the explained variation in \% (relative to $\syh$); 
		population contribution in parenthesis. 
		Diagonal: Isolated contributions of chromosomes represented
		by the first two terms in Equation (\ref{Rzexplicit}). 
		Off-diagonals: Contributions due to empirical correlations between
		SNP information of the chromosomes
		given by individual components of the sum in Equation (\ref{Rzexplicit}). 
		\label{tab: arab} }%
	{\tabcolsep=4.25pt
		\begin{tabular}{@{}c|lllll|r@{}}
				Chromosomes & $1$ & $2$ & $3$ & $4$ & $5$ & $\dot{R}^2_{Z_{j}}$\\
			\hline
			$1$ & $4.32 \, (3.94)$ & $0.63$ & $0.19$ & $1.06$ & $2.06$ &  $8.25 \, (3.94)$\\
			$2$ & $0.63$ & $6.43 \, (5.20)$ & $0.25$ & $1.57$ & $2.58$ & $11.46 \, (5.20)$\\
			$3$ & $0.19$ & $0.25$ & $2.67 \, (2.62)$ & $0.50$ & $1.06$  &$4.67 \, (2.62)$\\
			$4$ & $1.05$ & $1.57$ & $0.50$ & $14.11 \, (11.76)$ & $5.12$ & $22.35 \, (11.76)$\\
			$5$ & $2.06$ & $2.58$ & $1.06$ & $5.12$ & $34.70 \, (22.28)$ & $45.52 \, (22.28)$\\
			\hline
			$\dot{R}^2_{Z_{j}}$ & $8.25 \, (3.94)$ & $11.46 \, (5.20)$ & $4.67 \, (2.62)$ &
			$22.35 \, (11.76)$ & $45.52 \, (22.28)$ & $92.25 \, (47.80)$\\
			\hline
	\end{tabular}}
\end{table}

In Figure \ref{fig: arab5}A, we illustrate the coefficient 
p-values based on univariable LM's 
where single markers on chromosome $5$ are regressed on \textit{ft10} (Manhattan plot). 
However, these univariable models suffer strongly from confounding, in particular
because markers close on the genome tend to be highly correlated. 
Additionally, significance does not necessarily imply any relevance. 
To mitigate these issues, we introduce a novel dispersion relevance plot 
in Figures \ref{fig: arab5}B to visualize the parameter-wise contributions of 
individual SNP's to the observed variation in \textit{ft10} as described 
in the end of Section \ref{sec: part2}. 

\begin{figure}[h!]
	\singlespace    \centering
	\captionsetup{font=small}
	\small
	\includegraphics[width = 1.0\textwidth]{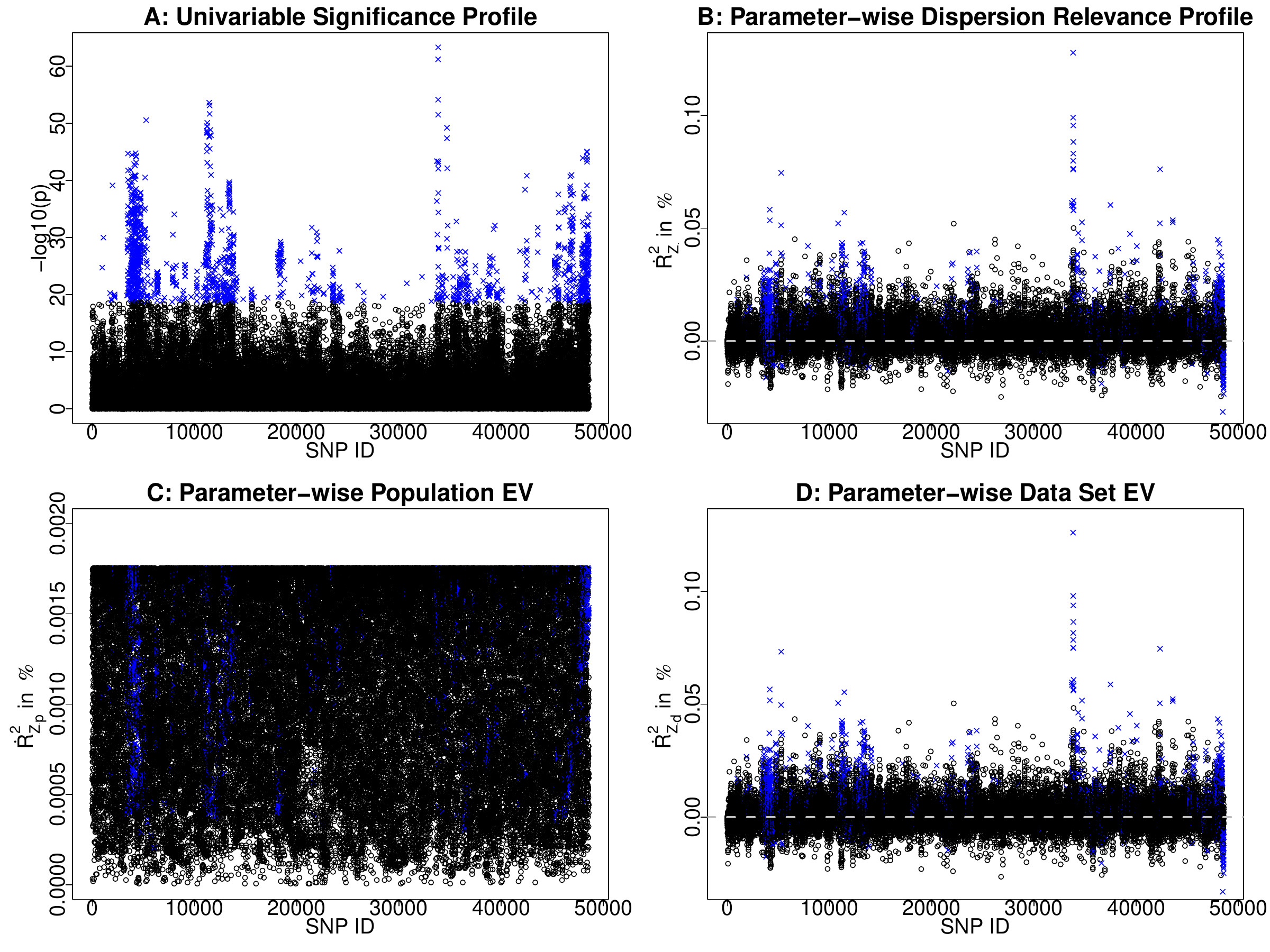}
	\caption{\doublespacing Arabidopsis Thaliana Chromosome $5$: 
		A: log-transformed p-values of the SNPs
		in single marker linear regression. 
		Areas with large values indicate regions in the genome with significant markers
		(threshold $10^{-14}$)
		(shape x in all four plots). 
		B: Parameter-wise relevance contributions, all markers have been fitted as 
		covariates associated with random effects in the LMM. 
		Areas with large values indicate regions in the genome with high relevance
		for the dependent variable \textit{ft10}. 
		C-D: Illustration of the \textit{population} explained variation (EV) and the 
		\textit{data set} specific
		explained variation relative to $\syh$. 
		The $y$-scale in C is remarkably smaller than in D. 
		\label{fig: arab5}}
\end{figure}

Several regions on chromosome $5$ indicated by the Manhattan-type plot in
Figure \ref{fig: arab5}A cannot be replicated to the same extent in the relevance 
profile in Figure \ref{fig: arab5}B, reflecting the above mentionend fact that
significance does not imply relevance. 
The region on the genome around SNP ID $34000$ is, however, prominent in both
the Manhattan-type plot as well as the relevance profile. 
Some significant SNPs in univariate analysis have negative contributions to 
the explained variation. 
This suggests that these regions might not be particularly important with 
respect to dispersion relevance in the fitted LMM. 
Naturally, in the context of this paper, we cannot give definitive answers of 
important regions or even individual SNPs. 
However, the simultaneous relevance assessment mitigates confounding and
problems concerned with significance-based analyses. 
Thus, it might be meaningful to direct the main focus on spikes present in 
both approaches and 
investigate differences. 

The population variation (Figure \ref{fig: arab5}C) is relatively constant 
and always positive over the whole chromosome, but comparably low for each marker. 
Differences arise solely to differences of the diagonal of $\Szh$, but are 
almost completely mitigated by the small value of $\suh$. 
These characteristics arise from the derivation of the population explained variance
as the unconditional expectation of $\sigma_r^2$ combined with the model assumptions
about the random effects, see Section \ref{sec: dsd} and Equation (\ref{LMM}). 
The data set specific explained variation (Figure \ref{fig: arab5}D) determines
the shape of the relevance profile which supports their importance in particular for 
parameter-wise or individual relevance assessments. 
The dispersion relevance profiles for the chromosomes $1-4$ are illustrated 
in the Supplementary Material, as well as an additional example (mice data) including 
coefficients associated with fixed effects.

\section{Discussion}
\label{sec4}

We have rigorously proven a novel proper decomposition of the observed sum of 
squares in the variance components form of the LMM, where only estimates that
result from the model fit using REML are involved. 
The components of this decomposition have been shown to represent unbiased
estimators of empirical explained variances, which constitute easily interpretable
estimands within the scope of our paper.  
We have introduced a bias-corrected estimator for the explained variance 
of covariates associated with fixed effects. 
Our estimator of the explained variation of covariates associated with 
random effects consists of what we call a \textit{population} and a purely additive 
\textit{data set specific} contribution. 
The \textit{population} part often constitutes the main contribution to the explained
variation, while the \textit{data set specific} part is particularly important for 
the parameter-wise dispersion relevance profile. 
The explained variance attributed to correlations between the covariates
associated with fixed and random effects has a non-zero 
\textit{data set specific} contribution but zero \textit{population} mean. 
On the one hand, we have utilized our decomposition to define an extension 
of the adjusted coefficient of determination from the LM to the LMM. 
On the other hand, we have introduced a parameter-wise dispersion relevance 
assessment for covariates associated with fixed and random effects on a common scale 
within a single LMM fit. 
To the best of our knowledge, a similar result has not been reported before. 
Its usefulness as a complement to the standard LMM output for hierarchical or
multilevel data and its application as a novel relevance screening plot for 
high-dimensional genomic data has been illustrated in several real data examples. 

Throughout the paper we emphasize that the data-generating processes of the 
covariates (associated with fixed or random effects) take a key part in 
the generation of explained variation. 
For instance, we have illustrated that the explained
variation is induced by the data-generating process $X$ of the covariates
associated with fixed effects. 
As per definition, the fixed effects have no variance and contribute only as
weighting factors.
The LMM, however, comprises several sources of variation. 
Additionally to the (co-)variances of the data-generating 
processes of the covariates associated with fixed and random effects,
the variation due to the random effects is important. 
Decomposition (\ref{vardecomp}) illustrates how these different sources of
variation interact and lead to a full decomposition of $\syh$.

Decomposition (\ref{vardecomp}) and with it explained variances, 
adjusted coefficient of determination and the parameter-wise relevance ranking 
are foremost applicable for model (\ref{LMM}) with continuous response and 
independent random effects. 
An extension to generalized linear mixed models and LMMs with correlated
random effects is part of future research.

The homoscedastic LM (\ref{statmod}) implies that the 
observations have identical variance, which in turn is decomposed into 
explained and residual variance.
This gives rise to the proportion of variance of the
observations explained by the covariates, which can be viewed as 
a population parameter.
In most heteroscedastic models, including the LMM, there is no 
unique variance of the dependent variable to decompose, with the exception 
of purely random intercept models. 
We focus on the decomposition of the quadratic 
form $\syh$, which reflects an empirical variance approach.
To this end, the realizations of the random effects are estimated
via empirical BLUP (or equivalently parametric Gaussian empirical Bayes), 
which is the prominent strength of the LMM over marginal models like the 
general linear model \citep{Xu2003}. 
Explicit prediction of random effects is advocated, for instance, 
by \citet{Robinson1991} and is very common, amongst others, in GWAS, 
and in cases were subject-specific merit is of interest, e.\,g.\,in cattle breeding
\citep{Morris1983, Robinson1991}. 
In the particular case that these estimates are of interest, 
we believe they should also be utilized in the calculation of explained variation
leading to the \textit{data set specific} contributions. 
These are purely additive components of the explained variance, i.\,e.\,they 
can always be considered complementary to the \textit{population} contribution.
To the best of our knowledge, none of the existing approaches take the data set 
specific contribution into account. 

Similar to ANOVA, and derived in the orthogonal case, it is a favourable property
that the proportions of variance attributable to each of the covariates
sum to $R^2$, although it is not a strict requirement for a meaningful measure 
of parameter-wise relevance \citep{Darlington1968}. 
In the LMM, alternative approaches such as semi-partial $R^2$'s are only available
for the coefficients associated with fixed effects, because dropping covariates
associated with random effects invalidates model comparisons. 
Additionally, semi-partial $R^2$'s suffer from drawbacks such as multiple model 
comparisons and lack of interpretation in terms of explained variation as
they typically do not sum to the full model $R^2$. 
Consequently, our parameter-wise relevance assessment is based on 
decomposition (\ref{vardecomp}) and equal (co-)variance sharing 
\citep{Hoffmann1960, Pratt1987}. 
Different weights for the (co-)variance sharing 
are possible and easily implemented, however, in the absence of convincing 
alternatives, we prefer the straightforward equal weighting in this already
complex context. 

Parameter-wise contributions to the adjusted $R^2$ can be negative.
Although potentially confusing, this happens for example when the variance 
of an estimate is large while the effect estimate itself is small. 
Additionally, individual contributions might become negative if the corresponding
variable decreases the variance of the observations by a strong negative correlation
with other variables \citep{Pratt1987}. 
In such cases, second-guessing the model at hand might be reasonable. 
Moreover, using a single measure to find the explained variation is always
a simplification of a very complex situation and should therefore not be
expected to provide unambiguous answers in all scenarios \citep{Pratt1987}. 

Our approach does not assess \textit{independent} variance contributions of
covariates associated with fixed and random effects, but contributions
given a specific model context. 
Therefore, our parameter-wise covariate relevance assessment is not 
intended for variable selection purposes. 
As argued in \citet{Cantoni2021}, variable selection in the LMM should be based
on information criteria, whereas $R^2$ provides information about the 
fitted model. 
Similarly, \citet{Edwards2008} state that the concept of explained variation
has limited use in model building. 

Our novel parameter-wise assignment of shares of explained variances
constitutes, to the best of our knowledge, the only both practically viable 
and theory-based option to simultaneously judge the relevance of covariates 
associated with fixed and random effects on the same scale, the observed variation, 
within a single LMM fit. 
In our opinion, the parameter-wise explained variations may be valuable 
in complementing standard model outputs as a
descriptive dispersion relevance measure of the covariates associated with
both fixed and random effects in fitted LMMs. 

\bibliography{BibliographySiM2}

\clearpage

\end{document}